\documentclass[A4paper,11pt]{article}

\usepackage{latexsym}
\usepackage{epsfig}
\usepackage{cite}
\usepackage{amssymb}
\usepackage{amsmath}
\usepackage{psfrag}
\usepackage{verbatim}

%%% \documentstyle[11pt,psfig]{article}
%\addtolength{\textwidth}{3cm}
\newcommand{\ignore}[1]{}
\addtolength{\textwidth}{4cm}
\addtolength{\textheight}{4cm}
\addtolength{\oddsidemargin}{-2cm}
\addtolength{\topmargin}{-2.5cm}

\newtheorem{theorem}{Theorem}[section]

\newtheorem{lemma}[theorem]{Lemma}

\newtheorem{claim}[theorem]{Claim}

\newtheorem{definition}[theorem]{Definition}

\newcommand{\poly}{\hbox{poly}}
\newcommand{\otilde}{\widetilde{O}}
\newcommand{\qed}{\hfill $\Box$ \\}
\newcommand{\eps}{\varepsilon}

\newcommand{\EX}{\hbox{\bf E}}
\newcommand{\E}{\hbox{\bf E}}
\newcommand{\pr}{\hbox{\bf Pr}}

\title{Testing cycle-freeness: Finding a certificate}
\date{}
%\\{\small Notes not for distribution!!} }

\author{
C. Seshadhri \\
IBM Almaden Research Center
}\begin{document} 

\maketitle

\begin{abstract}
We deal with the problem of designing one-sided error property
testers for cycle-freeness in bounded degree graphs.
Such a property tester always accepts forests. Furthermore,
when it rejects an input, it provides a short cycle as a certificate.
The problem of testing cycle-freeness in this model
was first considered by Goldreich and Ron~\cite{GR97}.
They give a constant time tester with two-sided error (it
does not provide certificates for rejection) and prove a $\Omega(\sqrt{n})$ lower
bound for testers with one-sided error. We design a property
tester with one-sided error whose running time matches this lower bound
(upto polylogarithmic factors). Interestingly, this has
connections to a recent conjecture of Benjamini, Schramm,
and Shapira~\cite{BSS08}. The property of cycle-freeness
is closed under the operation of taking minors. This is the
first example of such a property that has an almost
optimal $\otilde(\sqrt{n})$-time
one-sided error tester, but has a constant time two-sided error tester.
It was conjectured in~\cite{BSS08} that this happens for a vast class of
minor-closed properties, and this result can seen as the first indication
towards that.

\end{abstract}

\section{Introduction}

Given a massive data set, one may wish to learn some properties
of this data set \emph{without} reading the whole input. The
standard decision problem is to accept an input if it satisfies
some given property and reject it otherwise. In property
testing~\cite{RS96, GGR98}, we deal with a relaxed version
of this decision problem: accept the input if it satisfies the
property, and reject if the input is \emph{far} from
satisfying the property. There are algorithms that can perform
this relaxed version, for a vast array of properties, in
\emph{sublinear} (or even independent of the input size) time
(see surveys~\cite{CSSurvey,F01,G98,R01}).
Goldreich, Goldwasser, and Ron first introduced property
testing for dense graphs in~\cite{GGR98}, and gave
testers for many different properties. Deep theorems
about the class of properties that can tested
in this model
have been given in~\cite{AS05,AENS06}.

In this paper, we will deal with property testing
in bounded degree graphs, as introduced by
Goldreich and Ron~\cite{GR97}. The input
undirected graph $G$ has degree bound $d$ (considered
as a constant) and $n$ vertices. It is represented by an adjacency list,
and therefore has a representation size of $nd$.
We will assume that $G$ is simple without self-loops.
Henceforth, we will only consider graphs of this form.
For every vertex $v$, we have access to the at most $d$ neighbors
of $v$. This allows a testing algorithm to perform walks in $G$,
something that cannot be done with adjacency matrices.
Given a property $\cal P$ and distance parameter $0 < \eps < 1$,
the graph $G$ is $\eps$-far from $\cal P$ if $G$ must
be changed at more than $\eps nd$ edges to have property $\cal P$.
A property tester accepts $G$ if it has property $\cal P$
and rejects if $G$ is $\eps$-far from $\cal P$ (both hold with
high probability). The running time of the property tester should be
$\poly(d\eps^{-1}) o(n)$ (if there is no dependence on $n$,
then we say that the running time is constant).
A stronger tester, called a \emph{tester with one-sided
error} (for ease of notation, we will call this
a one-sided tester) accepts with probability $1$ if $G$ has $\cal P$.
Also, if it rejects $G$, it provided a sublinear-sized
certificate that $G$ does not have $\cal P$.
This makes one-sided testers far more interesting, and harder
to design. For one-sided testers, we have to 
prove that if $G$ is $\eps$-far from
having $\cal P$, then a sublinear sized certificate that $G$ is not in $\cal P$
can be found in sublinear time.

Goldreich and Ron~\cite{GR97} showed that many connectivity
properties are testable in time independent of $n$.
Progress towards finding general classes of testable properties
for bounded degree graphs was first made
by Czumaj, Sohler, and Shapira~\cite{CS07,CSS}. Benjamini, Schramm,
and Shapira~\cite{BSS08} made a breakthrough towards
these characterizations, and showed that any minor-closed property
is testable. These algorithms run in time independent of $n$
but are not one-sided testers. Many interesting properties
do not fall into this framework. Indeed, Goldreich and Ron~\cite{GR97}
gave a lower bound of $\Omega(\sqrt{n})$ for property
testing bipartiteness in the bounded degree model.

There has been a surprising dearth of knowledge about
properties that require more than constant time
for testing. Goldreich and Ron, in a very involved
result, give a tester for bipartiteness~\cite{GR99}
whose running time almost matches\footnote{upto polylogarithmic factors} their lower bound of $\sqrt{n}$.
One of the very nice features of this work was
the first use of random walks for property testing.
The property of expansion received a lot
of attention recently, and also has almost optimal
property testers running in time $\otilde(\sqrt{n})$
\cite{GR00,CS-F07,KS,NS}.

\subsection{Cycle-freeness}

It is very intriguing to look at the natural and basic property
of \emph{cycle-freeness}, where the tester accepts
if the input is a forest, and rejects if it is
$\eps$-far from being one. Goldreich and Ron in
their initial paper~\cite{GR97} give a constant time tester
for this property. Marko and Ron extend
this to estimate the distance to cycle-freeness
in constant time~\cite{MR06}.
The tester of~\cite{GR97} is \emph{not} 
one-sided\footnote{This tester can reject forests
and does not provide a certificate on rejection.},
and fundamentally rests on the fact that
in a tree, the number of edges is less than
the number of vertices. They also prove
the somewhat surprising lower bound that one-sided 
cycle-freeness testing cannot be done in
$o(\sqrt{n})$ time. This is the problem
that we shall attack. We resolve
the problem of testing cycle-freeness by giving
a one-sided property tester 
whose running time is almost optimal.

\begin{theorem} \label{thm-main} There exists a one-sided property
tester for cycle-freeness whose running time is $\sqrt{n} \ \poly(d \eps^{-1}\log n)$.
In other words, this property tester always accepts the input $G$,
if $G$ is a forest. If $G$ is $\eps$-far from being a forest,
it rejects $G$ with probability at least $2/3$.
Moreover, when the tester rejects $G$, it provides
a certificate in the form of a short cycle (of length $\poly(\eps^{-1}\log n)$).
\end{theorem}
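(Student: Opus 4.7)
The plan is a sampling tester: sample $\widetilde{O}(\sqrt{n})$ random starting vertices and from each run a bounded local exploration that either exhibits an actual cycle of length at most $L=\poly(d\eps^{-1}\log n)$ or gives up. Because the tester only rejects upon producing a concrete cycle, one-sided error on forests is automatic, and it suffices to show that when $G$ is $\eps$-far from a forest the sampling succeeds with probability at least $2/3$. Two ingredients are needed: a structural statement saying that many vertices of $G$ lie on short cycles, and an algorithmic primitive that given such a vertex finds a short cycle cheaply.

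For the structural part, I would argue that if $G$ is $\eps$-far from cycle-free then $\Omega(\sqrt{n})$ vertices lie on some cycle of length $\le L$. The argument passes first to the $2$-core of $G$, obtained by iteratively peeling off vertices of degree $\le 1$: edges removed this way lie in trees hanging off the $2$-core and cannot belong to any cycle, so the $2$-core inherits the full cyclomatic excess $\ge \eps nd$ and has minimum degree $\ge 2$. Suppressing every maximal internally-degree-$2$ path then gives a multigraph $H$ of minimum degree $\ge 3$, whose edges represent paths of $G$; components of the $2$-core that are themselves simple cycles are handled separately, since if $\Omega(\eps nd)$ such cycle-components are packed into $n$ vertices their average length is already $O(1/(\eps d))$. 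A Moore-bound argument shows $H$ contains a cycle of length $O(\log|V(H)|)$, and iteratively removing such a short cycle while $H$ still has enough excess edges yields $\Omega(\eps nd)$ edge-disjoint short cycles in $H$. Since the total length of the paths suppressed to form $H$'s edges is at most $n$, an averaging argument forces at least $\Omega(\sqrt{n})$ of the $H$-cycles to pull back to $G$-cycles of length $\le L$, contributing $\Omega(\sqrt{n})$ distinct vertices on short cycles.

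For the algorithmic primitive, from each sampled vertex $v$ I would combine BFS with short random walks in the spirit of the bipartiteness tester of \cite{GR99} and the expansion testers \cite{GR00,KS,NS}. A pure BFS of depth $L/2$ would certify membership in a short cycle via any non-tree edge, but for the polylogarithmic $L$ here a $d$-ary BFS is too expensive, so I would instead probe the $L$-neighborhood of $v$ with many short random walks: a collision between two walks from $v$ at an internal vertex closes a cycle of length $\le L$. Running $\poly(d\eps^{-1}\log n)$ walks per start amplifies the per-start success probability to a constant, and standard concentration over $\widetilde{O}(\sqrt{n})$ independent starts pushes the overall success probability above $2/3$. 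The main obstacle I expect is the structural claim, specifically controlling the pull-back from short cycles in $H$ to short cycles in $G$: a single excess edge inside a long subdivided path can create only one cycle of length $\Omega(n)$, and the iterative peeling of $H$ has to produce enough short cycles that an averaging argument forces $\Omega(\sqrt{n})$ of them to remain short in $G$. Matching the polylogarithmic cycle length produced by this peeling argument to the exploration depth sustainable within the $\widetilde{O}(\sqrt{n})$ time budget is the tight quantitative coupling that drives the final running time.
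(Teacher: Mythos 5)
Your proposal splits the budget the wrong way and the algorithmic primitive fails as a result. You sample $\widetilde{O}(\sqrt{n})$ start vertices and perform only $\poly(d\eps^{-1}\log n)$ walks of length $\poly\log n$ from each. Even if the sampled vertex $v$ lies on a cycle of length $L=\poly\log n$, a pair of lazy random walks from $v$ follows each half of that cycle with probability on the order of $d^{-L/2}$, and more generally two short walks from a single source in a bounded-degree graph collide at a non-trivially distinct vertex with probability roughly $(\text{\#walks}\cdot\ell)^2/n = \poly(\log n)/n$. Multiplying by $\widetilde{O}(\sqrt{n})$ starts gives total success probability $\poly(\log n)/\sqrt{n}=o(1)$, not the claimed constant. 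There is no concentration over starts to save you: the events are not boosting, they are just failing independently. This is precisely the obstacle that makes one-sided cycle-freeness testing hard — knowing that a vertex lies on a short cycle does not make the cycle cheap to find locally.

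The paper's allocation is the opposite: $O(1/\eps)$ start vertices, but $m=\widetilde{O}(\sqrt{n})$ walks from each. From a single start $s$ in a well-connected piece $S$ (obtained via the Goldreich--Ron decomposition, Lemma~\ref{GR}, so that every $v\in S$ is reached with probability $q_v\geq\alpha=\Omega(\eps/(\sqrt{|S|n}\log n))$), the birthday paradox over $m^2$ pairs of walks gives $\Omega(m^2\alpha^2)=\Omega(1)$ expected cycle-witnessing collisions. The real technical content is then (i) showing that the ``dominant'' (bottleneck) edges form a forest (Lemma~\ref{lem-forest}), so that $\eps$-farness forces $\Omega(\eps nd)$ ``recessive'' edges that each contribute $\Omega(\alpha^2)$ to the collision probability (Lemma~\ref{lem-tester}); and (ii) a second-moment argument (Lemma~\ref{lem-cheb}) to show the collision count concentrates, which requires splitting into ``heavy'' and ``light'' walk profiles and handling the heavy case directly (Claim~\ref{clm-blue}). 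None of this machinery appears in, or can be substituted into, the many-starts/few-walks framework.

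Your structural claim is also on shaky ground as stated. Passing to the $2$-core and suppressing degree-$2$ paths is fine, and iteratively extracting $O(\log n)$-length cycles from the dense contracted multigraph $H$ is a standard Moore-bound argument, but the pull-back step is not controlled: a short $H$-cycle may use several contracted paths, and while the \emph{total} path length is $\leq n$, you need the cycles to be vertex- or edge-disjoint in $G$ to run the averaging argument you describe, and that disjointness is lost under contraction (distinct $H$-cycles can share contracted paths). Even granting the claim, it only gets you vertices on short cycles, which as above does not yield a sublinear local finder.
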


Why is this challenging? The difficulty comes from the
fact that the property tester needs to \emph{find} a cycle
in a graph that $\eps$-far from being cycle-free.
Most property testers for bounded degree graphs
essentially sample some random vertices and looks
at small neighborhoods around them. This
cannot work for a one-sided cycle-freeness tester.
The only testers that use superconstant time (actually $\otilde(\sqrt{n})$
time) are the testers for bipartiteness and expansion.
The main tool they use is that of random walks.
It is very interesting to have another property
that can be tested through a non-trivial use of 
random walks. Combinatorially, a one-sided cycle-freeness
tester is more interesting because when it rejects
the input $G$, it provides a certificate
in the form of a short cycle.

We use many ideas from the tester of Goldreich and Ron for
bipartiteness~\cite{GR99}.
One of their major contributions is to provide a framework to analyse
short random walks in bounded degree graphs, and
argue how it suffices to deal with ``well-connected"
graphs. Much of their result dealt with proving
theorems about this framework. With this in place,
they only need to test bipartiteness within these
special subgraphs. It turns out that testing
cycle-freeness within such subgraphs is much harder,
and requires new ideas.

We develop combinatorial tools to study the behavior
of random walks in graphs that are almost forests.
We show how random walks in such graphs can be
used to detect cycles. A major challenge is
to show that this can done in $\otilde(\sqrt{n})$ time, to
match the lower bound of Goldreich and Ron.
We are able to give a characterization of some special
edges in terms of random walk behavior, such
that these special edges form a forest. We then argue
that the presence of many non-special edges allows
our property tester to find a cycle.

Benjamini, Schramm, and Shapira~\cite{BSS08} also
observe the gap between property testers
and their one-sided cousins. We first explain
graph minors.
A minor of a graph $G$ is obtained by removing vertices,
edges, and contracting edges. A graph $G$ is $H$-minor free,
if $H$ is not a minor of $G$. A property is minor-closed
if the property is preserved on taking minors.
They show
that every minor-closed property is testable (with two-sided error)
with a constant number of queries and \emph{conjecture}
that all such properties can be tested with one-sided
error in $\otilde(\sqrt{n})$ time\footnote{Based on lower
bounds in~\cite{GR97}, they conclude that for a large class
of minor-closed properties, there is a lower bound
of $\Omega(\sqrt{n})$ for one-sided testing.}. 
We provide
the first such property with exactly this behavior,
thereby giving some (very small) evidence for this
conjecture. The property of being cycle-free is certainly
minor-closed. There is a constant time tester
for cycle-freeness (with two-sided error) and we give
an almost optimal $\otilde(\sqrt{n})$-time one-sided tester.
It is possible that the tools developed here could
yield some insight into the conjecture of~\cite{BSS08}.

In the language of minors, we can think of cycle-free graphs 
as $K_3$-minor free graphs, where $K_3$ is the triangle.
Our tester finds in sublinear time a forbidden minor in a graph
that is far from being $K_3$-minor free.
Our special edges form a $K_3$-minor free graph, and the remaining
edges boost the probability of finding a $K_3$-minor in sublinear
time. It would be very interesting to generalize our 
technique to other minors, making progress towards
the conjecture of~\cite{BSS08}. Specifically,
if we could do the same for $K_5$ and $K_{3,3}$ minors, 
then we would have a sublinear time procedure to find
forbidden minors in graphs far from being planar.

\section{Preliminaries and intuition}

The basic idea is to perform $O(\sqrt{n})$ random walks
of length $O(\log n)$ from a random subset of $O(\eps^{-1})$
vertices of $G$. We use a standard lazy random walk: each edge
of $G$ is assigned a probability of exactly $1/2d$. For a vertex
$v$, suppose that there are $d' \leq d$ edges incident to $v$.
There is a self-loop at $v$ with probability $1 - d/2d' \geq 1/2$.
We hope if $G$ is $\eps$-far from
being cycle-free, then we can find two \emph{different paths}
between two vertices. This provides us with a cycle
that can be output as a witness. Note that it is not necesarily
the case that there are an extremely large number of cycles
in such a $G$. Consider the case where $G$ consists
of $\eps n$ disjoint cycles of size $1/\eps$. The number
of cycles is relatively small, but on the other hand, 
they are not hard to detect. 

A technical lemma of~\cite{GR99} shows (roughly speaking)
that we can partition the vertices of 
$G$ into ``well-connected components".
Such a component $S$ has the following property: there
is a special vertex $s \in S$ such that short (polylogarithmic) random
walks from $s$ reach all vertices of $S$ with
high probability. Furthermore, there are few
edges connecting $S$ to the remaining vertices.
These well-connected components
do not necesarily have to look like expanders. Such
a component could even be a very small (polylogarithmic) path.
It is merely the case that random walks of a specfied length
from $s$ spread out to all of $S$.
One of the nice features of this lemma is that we do not
have to explicitly compute this partition. 
Note that the random walks
are performed in $G$, not in the induced subgraph of $S$. 
It is some kind of a thought experiment that allows us to imagine
that we are performing walks inside well-connected
components. The testing algorithm is completely oblivious to this
partition, and simply peforms walks in $G$.

We can therefore focus on any one
such well-connected component $S$
with $s$ as the special vertex.
For simplicity, assume that all edges
and vertices that we mention below are in $G_S$,
the induced subgraph on $S$.
Suppose we perform a large number of random walks
from $s$ and notice that there is an edge $e$
that always lies on a path from $s$ to $v$.
The edge $e$ is a sort of bottleneck.
Roughly speaking, we will show that there cannot be too
many bottleneck edges since they all form a forest.
Now, if $G_S$ was far
being a forest, then there must be many
edges in $G_S$ that are not bottlenecks.
We then show that the
presence of many non-bottleneck edges
will boost the probability of detecting
a cycle.

Now, we need to prove that a relatively
small number of walks can find a cycle. 
This requires a delicate argument.
We are able to argue that the probability
of two random walks witnessing a cycle is around $1/n$.
So, using a birthday paradox like argument,
(as done in~\cite{GR99}), we hope that it suffices
to perform $O(\sqrt{n})$ to detect a cycle.
Let the random variable $X$ be an indicator
for two random walks witnessing a cycle.
It boils down to seeing the behavior of
$X$. We showed that the expectation
of $X$ behaves like $1/n$, but the variance
can be too large. We look into \emph{when}
this variance is too large. This
happens because there a relatively small
number of paths which are part of most
cycles. It turns out that we can handle this
case separately (without dealing
with bottleneck edges) in exactly $O(\sqrt{n})$ time.

\section{Within a well-connected subgraph}

We first perform an
analysis in a well-connected component of $G$.
This component will be referred to as $S$, and
the induced subgraph on it as $G_S$.
We fix some parameters: the length of the random walk\footnote{The exact polynomial factor will be decided later.} 
$\ell = \poly(\eps^{-1}\log n)$,
a probability $\frac{\eps}{\sqrt{|S|n}\log n} < \alpha < 1$, and $m = c\eps^{-3}\sqrt{n}\ell\log^2n$
($c$ is a sufficiently large constant).
Let $q_v$ be the probability that a random 
walk from $s$ \emph{reaches} $v$. This means
that the walk from $s$ encounters $v$ during
its course.

Let $s \in S$ be a \emph{special vertex}
in $S$ if $q_v > \alpha$ for all $v \in S$.
The analysis in this section will only deal with
the vertices in $S$, although the random walk is 
being performed in $G$. 
From now on, the term random walk will refer only to walks from $s$
of length $\ell$.

Suppose that $G_S$ is far from being cycle-free. Our aim is to show that when many
random walks are performed from $s$, a cycle is detected with
high probability. A walk is
a sequence of edges with possible backtracking and self-loops.
The \emph{path} induced by this walk is the sequence of edges
from $s$ to $v$ obtained by removing self-loops and retraced
paths. If we can find two different paths from $s$ to $v$, then
this is a cycle in $G$.
Note that two different walks can induce the same path, so finding
two different walks from $s$ to $v$ yields no information.
This is a major problem when trying to find a cycle.
Just because there is a high probability of reaching $v$
from $s$ does not guarantee that different walks from $s$
to $v$ form a cycle. It could be possible that most
walks from $s$ to $v$ induce the same path.

When we deal with a walk from $s$ that reaches $v$,
we will not deal with the portion of the walk \emph{after}
$v$. In other words, we will say that two walks reach
$v$ by the same path if the walks induce the same
path upto $v$ (they can diverge after that).

We now describe a process {\sc Cycle Finder} that attempts
to find a cycle given such a vertex. This is analogous
to the algorithm of~\cite{GR97} for bipartiteness. 
We remind the reader that a graph $G$ is $\eps$-far
from being cycle-free, if at least an $\eps$-fraction
of edges need to be removed from $G$ to eliminate
all cycles. The main lemma
of this section is:

\begin{lemma} \label{lem-special} Let $s \in S$ be a special vertex for $S$ and $G_S$ be $\eps$-far from being cycle-free.
Then {\sc Cycle Finder} with $s$ as input will detect a cycle with probability
at least $2/3$.
\end{lemma}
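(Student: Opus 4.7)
The plan is to exploit the special vertex structure via a bottleneck argument. For each $v \in S$, call an edge $e$ a \emph{bottleneck} for $v$ if $e$ lies on all but a $\beta$-fraction of the random-walk paths from $s$ that reach $v$ (conditioned on reaching $v$), where $\beta$ is a small parameter to be fixed. Let $F$ be the union over $v \in S$ of one bottleneck edge per $v$ (say, the one ``closest'' to $v$), or more carefully the union of the essentially-unique bottleneck paths. The first structural claim I would establish is that $F$ is a forest: if $F$ contained a cycle, then two distinct vertices on that cycle would have incompatible near-deterministic paths from $s$ (one routed around the cycle one way, the other routed around it the other way), contradicting that each of their bottleneck edges lies on a $(1-\beta)$-fraction of walks that reach them. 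Hence $|F| \le |S|-1$.

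Next I would use the $\eps$-farness hypothesis. Since $G_S$ is $\eps$-far from cycle-free, it has at least $\eps |S| d$ more edges than a spanning forest, and since $|F| \le |S|-1$, there are at least $\eps|S|d$ \emph{non-bottleneck} edges. For each non-bottleneck edge $e=(u,w)$, its very non-bottleneckness means there is a vertex $v$ reached through $e$ by two essentially different paths, each taken by a random walk with probability at least $\alpha/\poly(\ell)$ and $\beta \cdot \alpha/\poly(\ell)$ respectively (invoking $q_v>\alpha$ and the non-bottleneck definition). Thus an ordered pair of independent walks produces two different paths to $v$, i.e., exposes a cycle, with probability at least $\beta \alpha^2 / \poly(\ell)$. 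Summing over the $\Omega(\eps |S| d)$ non-bottleneck edges and using $\alpha \ge \eps/(\sqrt{|S|n}\log n)$, the chance that a particular ordered pair of walks witnesses a cycle is $\Omega(\eps^3 / (n \cdot \poly(\log n)))$.

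With $m = \Theta(\eps^{-3}\sqrt{n}\,\ell\log^2 n)$ walks one then has $\Theta(m^2)$ ordered pairs and so the \emph{expected} number of witnessing pairs is polylogarithmically large, exactly in the spirit of the birthday-paradox analysis of~\cite{GR99}. The main obstacle, and where I would have to work hardest, is controlling the variance of this count: witnessing pairs are far from independent, since a small number of ``heavy'' partial paths can be reused by many cycles, and this concentration of mass inflates the second moment. My plan is a two-regime split. In the typical regime where no single partial path is traversed with probability above some threshold $\tau$, a Chebyshev / second-moment argument on the number of witnessing pairs yields constant success probability. In the pathological regime where a small set of vertices absorbs probability mass at least $\tau$, those vertices themselves are hit by $\Theta(\sqrt{n})$ walks with constant probability via a direct birthday collision, and one shows that two walks reaching such a ``heavy'' vertex are very likely to arrive along genuinely distinct paths, again exposing a cycle. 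Union-bounding over the two regimes and choosing $\tau, \beta$ to match the parameter $m$ gives the claimed $2/3$ success probability for {\sc Cycle Finder}.
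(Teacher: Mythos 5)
Your proposal follows the paper's proof strategy quite closely: a forest of bottleneck (the paper's ``dominant'') edges, $\eps$-farness forcing many non-bottleneck (``recessive'') edges, a per-pair witnessing probability of roughly $\alpha^2 \approx \eps^2/(n\,\poly\log n)$, a birthday-paradox count of witnessing pairs, and a two-regime split to control the second moment. On all of those points you have reconstructed the argument.

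The one place I would push back is the definition and handling of the ``heavy'' regime. You define heaviness as a vertex absorbing probability mass at least $\tau$, and then plan to show that two walks reaching such a vertex ``are very likely to arrive along genuinely distinct paths.'' That last step is exactly the danger the paper goes out of its way to avoid, and as stated it can fail: a vertex can absorb a lot of probability mass while essentially all of that mass is concentrated on a single dominant path, in which case two walks reaching it teach you nothing. Indeed this is precisely the phenomenon that makes the variance blow up in the first place. The paper's resolution is to define heaviness not on vertices but on walks, via $cyc_W$, the probability that a fresh random walk forms a cycle with $W$; a walk is heavy when $cyc_W > 1/\sqrt{n}$. With that definition the birthday argument is immediate --- once you catch a heavy walk, a second batch of $\Theta(\sqrt{n})$ walks collides with it by definition --- and there is nothing further to prove about ``distinct paths.'' The paper's blue vertices ($q_v^H > \alpha/4$) are then introduced only to show that either a heavy walk is likely to be sampled, or else one may restrict the entire second-moment analysis to light walks, for which $r_W \le cyc_W \le 1/\sqrt{n}$ caps the cross-term $\E[X_{ij}X_{ik}]$. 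So the fix you need is to replace ``heavy vertex'' with ``heavy walk,'' keyed to cycle-forming probability rather than raw reachability mass; otherwise the overall structure of your argument matches the paper's.
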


\begin{center}
\fbox{\begin{minipage}{\columnwidth} {\sc Cycle Finder}\\
{\bf Input:} Vertex $s \in S$
\begin{enumerate}
    \item Perform $m$ random walks of length $\ell$ from $s$.

		\item Look at the subgraph $G'$ induced by all the vertices reached
		by the random walks. ($G'$ has at most $dm\ell$ edges and is therefore 
		of $\otilde(\sqrt{n})$ size.) If $G'$ is
		cycle-free, ACCEPT. Otherwise REJECT and output a cycle.

%%    \item If two walks that end at the same vertex induce different paths, output REJECT.
%%    
%%    \item Consider two walks that end at adjacent vertices $u,v \in S$. If the walks together
%%    with the edge $(u,v)$ give a cycle, output REJECT.
%%    
%%    \item If no such cycles are detected, output ACCEPT.
\end{enumerate}
\end{minipage}}
\end{center}

Before we study the behavior of this procedure, we will need a few definitions.
We will look at every edge $(u,v)$ as a pair of directed edges $\langle u,v \rangle$
and $\langle v,u \rangle$. For the sake of clarity, angular braces 
represent a directed edge and parentheses denote an undirected edge.

\begin{definition}\label{def-iso} 
A vertex $v \in G_S$ is \emph{isolated} if more than half of the walks that reach $v$ induce the same path (called
the \emph{dominant path}). Furthermore, the probability
of a walk reaching $v$ and inducing a different path is $< \alpha/2$.
\end{definition}

\begin{definition} \label{def-dominant} A directed edge $\langle u,v\rangle$ ($(u,v)$ is in $G_S$) 
is \emph{dominant} if the following conditions hold: 

\begin{enumerate} 
	\item The vertex $v$ is isolated. 
	\item The probability that a walk reaches $v$ without passing through the edge $(u,v)$ is $< \alpha/2$.
\end{enumerate}

An undirected edge $(u,v)$ is \emph{recessive}, if neither $\langle u,v \rangle$
or $\langle v,u \rangle$ are dominant.
\end{definition}

The dominant edges are the bottlenecks that we mentioned in the previous
section. We now prove some small claims about these edges that will allow
us to bound their number.

\begin{claim} \label{clm-dom} Given a dominant edge $\langle u,v\rangle$, the dominant
path to $v$ passes through $(u,v)$.
\end{claim}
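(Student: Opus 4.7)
The plan is to proceed by contradiction: suppose the dominant path $P$ from $s$ to $v$ does not pass through the edge $(u,v)$, and derive that this forces the probability of reaching $v$ without using $(u,v)$ to be too large, contradicting the second condition in the definition of a dominant edge.

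First I would unpack what the hypotheses give us quantitatively. Since $\langle u,v \rangle$ is dominant, the vertex $v$ is isolated, so by Definition~\ref{def-iso} there is a path $P$ such that strictly more than half of the walks that reach $v$ induce $P$. Letting $p_P$ denote the probability that a random walk from $s$ reaches $v$ and induces exactly the path $P$ up to $v$, this says $p_P > q_v/2$. Since $s$ is a special vertex for $S$ and $v \in S$, we have $q_v > \alpha$, so $p_P > \alpha/2$.

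Now assume, toward a contradiction, that $P$ does not contain the edge $(u,v)$. Then every walk whose induced path is $P$ reaches $v$ without ever traversing $(u,v)$ (in either direction). Hence the probability that a random walk reaches $v$ without passing through $(u,v)$ is at least $p_P > \alpha/2$. But the second clause of Definition~\ref{def-dominant} asserts that this very probability is strictly less than $\alpha/2$. This is the contradiction, and therefore $P$ must pass through $(u,v)$.

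I don't anticipate any real obstacle: the argument is essentially bookkeeping between the two $\alpha/2$ thresholds appearing in the definitions, glued together by the $q_v > \alpha$ guarantee that comes from $s$ being special. The only point that deserves a sentence of care in the write-up is that ``the walk reaches $v$ via $P$'' really does imply ``the walk reaches $v$ without passing through $(u,v)$'' under our contradiction hypothesis; this uses the convention, stated just before the definitions, that we only look at the portion of the walk up to the first arrival at $v$, so an edge not on $P$ has not yet been traversed at the moment $v$ is reached.
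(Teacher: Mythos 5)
Your proof is correct and follows essentially the same contradiction argument as the paper: assume the dominant path avoids $(u,v)$, lower-bound the probability of reaching $v$ without traversing $(u,v)$ by $\alpha/2$, and contradict the second condition of Definition~\ref{def-dominant}. The only cosmetic difference is that you invoke the ``more than half'' clause of Definition~\ref{def-iso} to get $p_P > q_v/2 > \alpha/2$, whereas the paper instead uses the clause bounding non-dominant paths by $\alpha/2$ and subtracts from $q_v > \alpha$; both routes are valid and equivalent in spirit.
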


\begin{proof} Assume by contradiction that the dominant
path does not contain the edge $(u,v)$.
By the definition of the dominant path,
the probability that a walk reaches $v$ by a path
passing through $(u,v)$ is $< \alpha/2$.
By the properties of $S$, we know that the probability
of reaching $v$ is $\geq \alpha$. This implies
that the probability of reaching $v$ by a path
that does not contain $(u,v)$ is $> \alpha/2$.
This violates the second part of Definition~\ref{def-dominant}.
\qed
\end{proof}

\begin{claim} \label{clm-both} Both $\langle u,v\rangle$ and $\langle v,u\rangle$ cannot be dominant.
\end{claim}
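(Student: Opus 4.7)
The plan is to assume both $\langle u,v\rangle$ and $\langle v,u\rangle$ are dominant and derive a contradiction by producing an event — ``a walk reaches $u$ by a path avoiding the edge $(u,v)$'' — that occurs with probability exceeding $\alpha/2$, in direct violation of the second clause of Definition~\ref{def-dominant} for $\langle v,u\rangle$.

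First I would invoke Claim~\ref{clm-dom} on the edge $\langle u,v\rangle$: the dominant path $P_v$ to $v$ must contain the edge $(u,v)$. Since a path induced from a walk (by stripping self-loops and retraced pieces) is treated as simple, and $(u,v)$ is an edge incident to the endpoint $v$ of $P_v$, the edge $(u,v)$ can only appear as the \emph{last} edge of $P_v$. Hence $P_v$ has the form $s \to \cdots \to u \to v$, and in particular its prefix from $s$ to $u$ does not use the edge $(u,v)$.

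Next I would lower bound the probability that a walk reaches $v$ along the dominant path $P_v$. Because $\langle u,v\rangle$ is dominant, $v$ is isolated, so by the second clause of Definition~\ref{def-iso} the probability of reaching $v$ by any path other than $P_v$ is $< \alpha/2$. Since $s$ is a special vertex for $S$, the total probability $q_v$ of reaching $v$ is at least $\alpha$, so
\[
\pr[\text{walk reaches } v \text{ via } P_v] \;>\; \alpha - \alpha/2 \;=\; \alpha/2.
\]

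Finally, every walk that reaches $v$ via $P_v$ must first reach $u$ via the prefix of $P_v$, and this prefix avoids the edge $(u,v)$. Consequently the probability that a walk reaches $u$ without passing through $(u,v)$ is at least $\alpha/2$, contradicting the second condition of Definition~\ref{def-dominant} applied to the (assumed) dominant edge $\langle v,u\rangle$. The one step that most deserves care is the simplicity of induced paths, which is what forces $(u,v)$ to appear in $P_v$ only as its last edge and makes the ``prefix avoids $(u,v)$'' conclusion valid; this is the pivot of the whole argument.
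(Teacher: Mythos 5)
Your proof is correct and reaches the same contradiction as the paper's (a probability $> \alpha/2$ of reaching $u$ by a path avoiding $(u,v)$, violating part~2 of Definition~\ref{def-dominant} for $\langle v,u\rangle$). The only variation: to obtain the $> \alpha/2$ mass you invoke Claim~\ref{clm-dom} and the isolation clause to lower-bound the mass of the single dominant path $P_v$, whereas the paper gets it more directly from part~2 of Definition~\ref{def-dominant} for $\langle u,v\rangle$ (walks reaching $v$ \emph{through} $(u,v)$ have mass $> \alpha - \alpha/2$), and so never needs Claim~\ref{clm-dom} inside this argument. Both versions hinge on the point you rightly flag as the pivot: the induced path to $v$ is simple, so $(u,v)$ can only occur as its final edge, and therefore every contributing walk first reaches $u$ without having traversed $(u,v)$.
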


\begin{proof} Since $\langle u,v \rangle$ is dominant, the probability that a walk reaches $v$ without
going through $(u,v)$ is $< \alpha/2$. The probability of reaching $v$ is $\geq \alpha$.
Therefore, the probability of a walk reaching $v$ after taking $(u,v)$ is $> \alpha/2$.
Note that the portion of these walk from $s$ to $v$ cannot contain any other occurrence
of $v$. Therefore, these walks reach $u$ without passing through $(u,v)$, contradicting that $\langle v,u\rangle$
is dominant.
\qed
\end{proof}

With these claims in hand, we are ready to connect dominant edges
to cycle-freeness. 

\begin{lemma} \label{lem-forest} The set of dominant edges form a directed forest.
\end{lemma}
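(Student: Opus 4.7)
The plan is to combine two simple structural observations with a first-visit-time contradiction to rule out cycles in the directed graph $H$ formed by the dominant edges.

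First I would show that every vertex has in-degree at most $1$ in $H$. By Claim~\ref{clm-dom}, if $\langle u,v\rangle$ is dominant then $P_v$ terminates with the edge $(u,v)$; since $P_v$ is a simple path its last edge is uniquely determined, so for each fixed $v$ at most one $u$ can make $\langle u,v\rangle$ dominant. Combining this with Claim~\ref{clm-both}, any undirected cycle in $H$ must be a consistently oriented directed cycle $v_1 \to v_2 \to \cdots \to v_k \to v_1$: the $k$ in-degrees around the cycle sum to $k$ and are each at most $1$, so they all equal $1$, and $k \geq 3$ because $k = 2$ would produce an anti-parallel pair forbidden by Claim~\ref{clm-both}. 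It therefore suffices to rule out directed cycles in $H$.

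Suppose for contradiction such a directed cycle exists. For each $i$ let $D_i$ be the event that the random walk of length $\ell$ from $s$ has induced path $P_{v_i}$ at its first visit of $v_i$. By the isolated condition for $v_i$, every other way of reaching $v_i$ contributes less than $\alpha/2$, so $\pr[D_i] > q_{v_i} - \alpha/2$. Because $P_{v_i}$ ends with the edge $(v_{i-1},v_i)$, the event $D_i$ forces the first visit of $v_{i-1}$ to occur strictly before the first visit of $v_i$. Hence the joint event $\bigcap_i D_i$ would imply the cyclic inequality $t_{v_1} < t_{v_2} < \cdots < t_{v_k} < t_{v_1}$ on first-visit times, which is impossible, so $\pr\!\left[\bigcap_i D_i\right] = 0$.

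The final step is to lower-bound $\pr\!\left[\bigcap_i D_i\right]$ strictly away from zero, producing the desired contradiction. My plan is to peel the $D_i$ off one at a time using the key observation that $D_{i+1}$ already implies the walk reaches $v_i$ (since $v_i$ lies on the simple path $P_{v_{i+1}}$), so that the isolated bound for $v_i$ gives $\pr[D_{i+1} \cap D_i^c] < \alpha/2$ at each peeling step. The main obstacle is precisely this bookkeeping: a naive union bound only yields $\pr\!\left[\bigcap_i D_i\right] > q_{v_1} - k\alpha/2$, which needs $q_{v_1} > k\alpha/2$ to produce a contradiction. To handle long cycles I would anchor the chain at the cycle vertex of largest $q_{v_i}$ and exploit the strong positive correlation among the $D_i$ (a walk realizing $P_{v_1}$ already traces prefixes of the other dominant paths), so that the accumulated loss around the cycle is in fact strictly smaller than $q_{v_1}$, forcing the required strictly positive lower bound.
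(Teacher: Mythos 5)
Your overall architecture mirrors the paper's: first reduce to directed cycles via an in-degree-at-most-one argument, then derive a contradiction from a hypothetical directed cycle. Your first half is actually slightly cleaner than the paper's (the paper argues in-degree $\leq 1$ via probabilities, whereas you deduce it directly from Claim~\ref{clm-dom} — the dominant path's last edge is unique). Your second half, however, has a genuine gap that you yourself flag, and the paper's proof takes a route that avoids it entirely.

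Concretely, your events $D_i$ and the first-visit-time contradiction are a fine skeleton, and your observation that $\pr[D_{i+1}\cap D_i^c] < \alpha/2$ (because $D_{i+1}$ forces the walk to reach $v_i$, and isolation of $v_i$ caps the non-dominant mass) is correct. But the peeling chain only yields $\pr[\bigcap_i D_i] > q_{v_1} - k\alpha/2$, which is not positive for $k \geq 2$ since $q_{v_1}$ may be as small as $\alpha$. Your proposed repair — anchor at the largest $q_{v_i}$ and invoke ``strong positive correlation'' so the accumulated loss is ``strictly smaller'' — is asserted, not proved, and anchoring does not help because the issue is the factor $k$, not the choice of base vertex. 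The insight you need (and hint at in your parenthetical) is that the $D_i$ are not merely correlated but actually \emph{nested}: since $D_{i+1}$ forces the walk to reach $v_i$ along the prefix of $P_{v_{i+1}}$ with probability $> \alpha/2 \geq \alpha - \alpha/2$, isolation of $v_i$ forces that prefix to be $P_{v_i}$ itself, hence $D_{i+1}\subseteq D_i$. Chasing this around the cycle makes all $D_i$ equal, so $\pr[\bigcap_i D_i] = \pr[D_1] > 0$ and your contradiction lands — but this nesting step is the entire content of the argument and must be proved, not waved at.

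The paper sidesteps the probability-accumulation problem altogether. It first shows, using dominance of each cycle edge, that the dominant path $P_{u_k}$ must first enter the cycle at $u_0$ (any other entry vertex $u_i$ would be reached by $P_{u_k}$ without first using $(u_{i-1},u_i)$, and since this happens with probability $>\alpha/2$ it contradicts dominance of $\langle u_{i-1},u_i\rangle$). Then, since $\langle u_k,u_0\rangle$ is dominant and $u_k$ is isolated, $P_{u_0}$ must extend $P_{u_k}$ by the edge $(u_k,u_0)$. But $P_{u_k}$ already contains $u_0$, so $P_{u_0}$ would visit $u_0$ twice, contradicting simplicity of the induced path. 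This is a purely combinatorial statement about the dominant paths and requires only two applications of the $\alpha/2$ bound, not $k$ of them. You should either complete the nesting argument or switch to this direct path-structure contradiction.
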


\begin{proof} Consider the subgraph (directed) $U$ created by the dominant
edges.
First we show that no vertex in $U$ can have indegree $\geq 2$. Suppose
there are edges $\langle u,v\rangle$ and $\langle w,v\rangle$. With probability
$> \alpha/2$, the path reaching $v$ goes through edge $(u,v)$ (or $(w,v)$). Note that
the probability of reaching $v$ through a dominant path is $> \alpha/2$. Since this
must both pass through $(u,v)$ and $(w,v)$, we reach a contradiction. 

Suppose the graph $U$ has an undirected cycle. By what was proven above, this
must also be a directed cycle. Consider the directed cycle $\langle u_0, u_1\rangle$,
$\langle u_1, u_2\rangle, \cdots$, $\langle u_{k-1},u_k\rangle$, $\langle u_k, u_0\rangle$.
Let us look at all walks that reach $u_k$ through the dominant path (which passes
through $u_{k-1}$). These walks have a cumulative probability of $> \alpha/2$.
Suppose this path reaches the set of vertices $u_0, \cdots, u_k$ first at
$u_i$ ($i \neq 0$). Since this path does not pass through $u_{i-1}$, it
must be the case the with probability $> \alpha/2$, a walk reaches $u_i$
without passing through the edge $(u_{i-1},u_i)$. This contradicts the
fact that $\langle u_{i-1}, u_i\rangle$ is dominant.

With probability $> \alpha/2$, a walk reaches $u_k$ inducing a 
path (which is dominant) that passes through $u_0$. Since $\langle u_k, u_0\rangle$ is dominant,
with probability $> \alpha/2$, a walk reaching $u_0$ induces a fixed
path passing through
$(u_k, u_0)$. Since $\langle u_{k-1},u_k\rangle$ is dominant,
this fixed path must follow the dominant path to $u_k$. But we
showed that the dominant path to $u_k$ passes through $u_0$.
%We now
%have that with probability $> \alpha/2$, a walk reaches $u_0$
%passing through $u_k$, but first goes through $u_0$ before
%reaching $u_k$.
This is absurd since it contradicts the fact that
the dominant path to $u_0$ passes through $u_k$. This proves that
$U$ is a directed forest.
\qed
\end{proof}
 
\subsection{Analysis}
 
Assume that $G_S$ is $\eps$-far from being cycle free.
By Lemma~\ref{lem-forest}, there must be at least $\eps nd$
recessive edges. We will show that the presence of many recessive edges 
will imply that {\sc Cycle Tester} will find cycles 
with high probability. Let $X$ be the indicator random variable
for two independent random walks performed by {\sc Cycle Tester}
forming a cycle.
The approach is to show that the expectation of $X$
is large enough (around $1/n$) because of the
recessive edges.
To ensure that a sublinear number of walks suffice
to find a cycle, we need the variance of $X$ to be small. It
turns out that the variance of $X$ \emph{can} be quite
large, and this is because of certain paths
that we call~\emph{heavy}. Fortunately for us,
cycles involving heavy paths can actually
be detected very quickly.

We split our analysis
into two parts. If there are many heavy paths,
then we can argue directly that there is
a high probability of finding a cycle. On the
other hand, if there are not too many heavy
paths, we can contain that variance of $X$.
In either case, we only need $\otilde(\sqrt{n})$ 
time to find cycles.

\begin{definition} For a random walk $W$, let $cyc_W$ be the probability
that a random walk of length $\ell$ from $s$ creates a cycle with $W$.

If $cyc_W > 1/\sqrt{n}$, then $P$ is \emph{heavy}. Otherwise, it is \emph{light}.
\end{definition}

We first deal with heavy paths. 
%Let $q_v$ be the probability that
%a walk \emph{reaches} vertex $v$. Note that $q_v \geq p_{s,v} \geq \alpha$.
We split the probability mass of walks $q_v$ into two disjoint parts: $q^L_v$ ($q^H_v$)
is the probability that $v$ is reached by a light path (resp. heavy path).
Obviously, $q_v = q^L_v + q^H_v$. Call vertex $v$ \emph{blue} if $q^H_v > \alpha/4$.

\begin{claim} \label{clm-blue} If there are more than $\eps |S|/2$ blue vertices, then
{\sc Cycle Tester} finds a cycle with probability $> 2/3$.
\end{claim}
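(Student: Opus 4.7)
The plan is to use the abundance of blue vertices to establish a substantial probability $p$ that a single random walk reaches some blue vertex via a heavy path, then anchor the first $m/2$ walks on such a heavy reach and use the heavy-path property to close a cycle in the second $m/2$ walks.

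For the lower bound on $p$: writing $B$ for the set of blue vertices, I would first note that $\sum_{v \in B} q^H_v > |B|\cdot(\alpha/4) > \eps|S|\alpha/8$. This sum equals the expectation, for a single walk $W$, of $|\{v \in B : W \text{ reaches } v \text{ via a heavy path}\}|$. Because a length-$\ell$ walk visits at most $\ell+1$ distinct vertices, that count is deterministically at most $(\ell+1)\,\mathbf{1}[W \text{ reaches some blue vertex via a heavy path}]$, so taking expectations gives $p \geq \eps|S|\alpha/(16\ell)$. Using $\alpha > \eps/(\sqrt{|S|n}\log n)$ then yields $p = \Omega(\eps^2\sqrt{|S|}/(\ell\sqrt{n}\log n))$.

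Next I would split the $m$ walks into two independent halves $A,B$ of size $m/2$ and chain two high-probability events. First, the probability that no walk in $A$ reaches a blue vertex via a heavy path is at most $(1-p)^{m/2} \leq e^{-mp/2}$; plugging in the bound on $p$ gives $mp/2 = \Omega(c\eps^{-1}\sqrt{|S|}\log n) = \Omega(\log n)$, so with probability $1-o(1)$ some $W_{i^*}\in A$ reaches a blue vertex $v^*$ via a heavy path $P^*$. Second, condition on the triple $(W_{i^*},v^*,P^*)$; the walks in $B$ are independent of this conditioning, and by definition of heavy $cyc_{P^*}>1/\sqrt{n}$, so each $W_j\in B$ independently forms a cycle with $W_{i^*}$ with probability exceeding $1/\sqrt{n}$. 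The chance that none does is at most $(1-1/\sqrt{n})^{m/2}\leq e^{-m/(2\sqrt{n})}=e^{-\Omega(c\eps^{-3}\ell\log^2n)}$. Any such cycle lies entirely in the induced subgraph $G'$ and is therefore detected by {\sc Cycle Tester}. Taking $c$ large enough drives the sum of the two failure probabilities well below $1/3$.

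The delicate point is the lower bound on $p$: the naive ``one blue vertex'' bound $p > \alpha/4$ is too weak when $\alpha$ is close to its minimum $\eps/(\sqrt{|S|n}\log n)$ and $|S|$ is large, as it makes $mp=o(1)$. The averaging step above, which trades the $|B|>\eps|S|/2$ blue vertices against the walk-length budget $\ell$ of distinct vertices per walk, extracts the additional $\sqrt{|S|}$ factor needed to push $mp$ up to $\Omega(\log n)$ and make the subsequent two-stage concentration argument go through.
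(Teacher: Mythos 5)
Your proof is correct and follows essentially the same route as the paper's: lower-bound the single-walk probability of inducing a heavy path by an averaging argument that charges the at most $\ell$ (or $\ell+1$) vertices visited per walk against the $>\eps|S|/2$ blue vertices with $q^H_v>\alpha/4$, then split the $m$ walks into two independent halves, find a heavy walk in the first half, and close a cycle with it in the second half using $cyc_W>1/\sqrt{n}$. Your version is a bit more explicit than the paper's (restricting the sum to blue vertices, stating the conditioning step, and noting that the resulting cycle lies inside $G'$), but the underlying argument and parameter bookkeeping are the same.
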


\begin{proof} Consider a random walk $W$ of length $\ell$ from $s$.
Let $X_W$ be the indicator random variable for the event that 
the path $P$ induced by $W$ is heavy. Let indicator random variable $X_v$ 
be $1$ iff $P$ is heavy and passes through $v$. Since $W$ passes through
at most $\ell$ vertices, $X_W \geq \sum_v X_v/\ell$. By linearity of expectation,
$\E[X_W] \geq \sum_v \E[X_v]/\ell = \sum_v q^H_v / \ell$.
By the claim assumption, there are at least $\eps |S|/2$ vertices $v$
such that $q^H_v > \alpha/4$. That implies that 
$\E[X_W] \geq \eps \alpha |S| / 8\ell = \Omega(\frac{\eps \sqrt{|S|}}{\sqrt{n}\ell \log n})$.
The expectation $\E[X_W]$ is the probability that the random walk $W$ induces a heavy path.
Consider the first $m/2$ walks performed by {\sc Cycle Tester}. Since $m > c\eps^{-1}\sqrt{n}\ell\log n$,
with probability at least $5/6$, one of the first $m/2$ walks will induce
a heavy path (abusing notation, call this $W$).

We know that $cyc_W > 1/\sqrt{n}$. Since $m > c\sqrt{n}$, with probability at least $5/6$, the latter $m/2$ walks
will contain a walk that creates a cycle with $W$. Therefore,
by a union bound, with probability at least $2/3$, {\sc Cycle Tester} finds
a cycle.
\qed
\end{proof}

We will henceforth assume that there are at most $\eps |S|/2$ blue vertices.
Call an edge non-blue if neither of its endpoints are blue. Since
the degree bound is $d$, we have at least $\eps|S|d/2$ non-blue recessive
edges. Consider two independent
random walks performed from $s$. Let ${\cal E}_v$
be the event that the two random walks are light and find a cycle in 
one of the following ways:
\begin{enumerate}
	\item The two walks reach $v$, and the corresponding paths induces
a cycle (which may not contain $v$; it is even possible
that one walk by itself creates a cycle).
	\item One walk reaches $v$, and the other walk reaches a neighbor $u \in S$.
The two walks and the edge $(u,v)$ induce a cycle.
\end{enumerate}
When the event ${\cal E} = \bigcup_v {\cal E}_v$ happens, then two walks performed by {\sc Cycle Tester} 
detect a cycle. Goldreich and Ron~\cite{GR99} use similar events in their bipartiteness
tester.

\begin{lemma} \label{lem-tester} If there are less than $\eps|S|/2$ blue
vertices, $\pr({\cal E}) \geq \eps |S| \alpha^2/16\ell$.
\end{lemma}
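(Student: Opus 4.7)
The plan is to lower bound $\pr(\mathcal{E}) = \pr(\bigcup_v \mathcal{E}_v)$ by producing, for each non-blue recessive edge $(u,v)$, a contribution of order $\alpha^2$ to some $\pr(\mathcal{E}_v)$, and then dividing by an overcount factor of $\ell$. As a preliminary, I would observe that on any non-blue vertex $w$, $q^L_w \ge q_w - q^H_w \ge \alpha - \alpha/4 = 3\alpha/4$, so nearly all the reaching probability lives on light paths. Since there are at most $\eps|S|/2$ blue vertices and each has degree at most $d$, at least $\eps|S|d/2$ of the edges given by Lemma~\ref{lem-forest} are non-blue recessive.

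For each such edge $(u,v)$, neither $\langle u,v\rangle$ nor $\langle v,u\rangle$ is dominant. I would split on whether $v$ is isolated. If $v$ is not isolated, the failure of Definition~\ref{def-iso} means either no single path to $v$ has probability exceeding $q_v/2$, or the non-dominant paths to $v$ carry mass at least $\alpha/2$. In both sub-cases a direct computation (essentially $q_v^2 - \sum_P p_P^2 = \Omega(\alpha^2)$) yields that two independent walks reach $v$ via two different light paths with probability $\Omega(\alpha^2)$; this is a case-1 contribution to $\mathcal{E}_v$.

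If instead $v$ is isolated, the non-dominance of $\langle u,v\rangle$ gives $\Pr[\text{walk reaches } v \text{ not through } (u,v)] \ge \alpha/2$, and an argument mirroring Claim~\ref{clm-dom} forces the dominant path $P^*$ to $v$ to avoid $(u,v)$. I would then take walk 1 to induce $P^*$ (probability $> q_v/2 \ge \alpha/2$, light since $v$ is non-blue) and walk 2 to reach $u$ via any light path (probability $\ge 3\alpha/4$ by non-blueness of $u$). The induced path $Q$ of walk 2, concatenated with $(u,v)$, is a path to $v$ using $(u,v)$ and hence distinct from $P^*$. This yields a case-2 cycle, contributing $\Omega(\alpha^2)$ to $\pr(\mathcal{E}_v)$. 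Edge cases where $Q$ already visits $v$ only help, since walk 2 alone already produces a cycle.

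Summing over the $\ge \eps|S|d/2$ non-blue recessive edges (each contributing via one of its endpoints), $\sum_v \pr(\mathcal{E}_v) = \Omega(\eps|S|\alpha^2)$. Because any pair of walks visits at most $2\ell$ distinct vertices, at most $O(\ell)$ events $\mathcal{E}_v$ can hold simultaneously for a fixed outcome; thus $\pr(\mathcal{E}) \ge \sum_v \pr(\mathcal{E}_v)/O(\ell) \ge \eps|S|\alpha^2/(16\ell)$ with appropriate constants absorbed into $m$. The main obstacle will be the isolated case: I need to confirm that $Q+(u,v)$ is genuinely a distinct light path from $P^*$ (the distinctness comes from $P^*$ avoiding $(u,v)$ while the concatenation uses it) and that summing contributions across edges avoids double-counting the same cycle. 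Making the overlap bound $O(\ell)$ rigorous, i.e., bounding how many $v$ can trigger $\mathcal{E}_v$ for one pair of walks, is the other technical point.
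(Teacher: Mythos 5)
Your proposal matches the paper's argument in structure and conclusion: count non-blue recessive edges, show each contributes $\Omega(\alpha^2)$ to some $\pr(\mathcal{E}_v)$ by exploiting the failure of dominance, then divide by the $O(\ell)$ overcount arising because two walks visit at most $2\ell$ vertices. The only cosmetic difference is the case split — you branch on whether $v$ is isolated, whereas the paper branches on whether part~2 of Definition~\ref{def-dominant} fails for both orientations of the edge — but both decompositions are exhaustive and yield the same $\Omega(\alpha^2)$ per-vertex bounds.
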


\begin{proof} We focus on two walks. There are at least $\eps|S|d/2$ non-blue recessive
edges. Consider such an edge $(u,v)$. Both $q^L_u, q^L_v$
are at least $3\alpha/4$. Neither $\langle u,v \rangle$ or 
$\langle v,u \rangle$ are dominant. We will
split up into cases depending of which part of Definition~\ref{def-dominant}
fails.

{\bf Case 1:} For both $\langle u,v \rangle$ and $\langle v,u \rangle$, part 2 of
Definition~\ref{def-dominant} fails. Therefore, the probability that a walk
reaches $u$ (and $v$) without passing through $(u,v)$ is at least $\alpha/2$.
By a union bound, the probability that such a walk is light is at least $\alpha/4$.
Therefore, the probability of ${\cal E}_u$ (and ${\cal E}_v$) is at least $\alpha^2/16$.

{\bf Case 2:} Wlog, let $v$ not be isolated. Now, suppose further that there is
no dominant path to $v$. The probability that the first walk is light and reaches $v$ is at least
$3\alpha/4$, by the properties of $S$. The probability that the second walk reaches
$v$ by a different path is at least $\alpha/2$, since no dominant path exists.
The probability that this walk is light is at least $\alpha/4$.
Therefore, the probability of ${\cal E}_v$ is at least $3\alpha^2/16$. Now, suppose
a dominant path (it must be light) to $v$ exists. Then the probability that a light walk reaches $v$ but
induces a different path (from the dominant one) is at least $\alpha/4$. We
can see that the probability of ${\cal E}_v$ is at least $\alpha^2/8$.\\

We can conclude that if $(u,v)$ is a non-blue recessive edge, then either $\pr({\cal E}_u)$ or $\pr({\cal E}_v)$
is at least $\alpha^2/16$. Therefore, 
there are at least $\eps |S|$ vertices $v$ such that $\pr({\cal E}_v) \geq \alpha^2/4$.
Fix a pair of walks. How many different ${\cal E}_v$'s can this pair belong to?
This number is at most $2\ell$, since that is the total number of vertices
reached by these walks. Therefore, 
$$ \pr({\cal E}) \geq \sum_v \pr({\cal E}_v) / 2\ell \geq \eps|S|\alpha^2/16\ell$$
\qed
\end{proof}

We now look at the probability that {\sc Cycle Tester} finds a cycle. Since {\sc Cycle Tester}
performs $m$ walks, we hope to boost the probability of rejection.

\begin{lemma} \label{lem-cheb} If there are less than $\eps |S|/2$ blue vertices, 
then {\sc Cycle Tester} finds a cycle with probability at least $2/3$.
\end{lemma}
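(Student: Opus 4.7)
The approach is the second moment method applied to $X := \sum_{1 \le i < j \le m} X_{ij}$, where $X_{ij}$ is the indicator that the pair $(W_i, W_j)$ witnesses event $\mathcal{E}$. In the regime of fewer than $\eps|S|/2$ blue vertices, Lemma~\ref{lem-tester} gives $\pr[\mathcal{E}] \ge p := \eps|S|\alpha^2/(16\ell)$, so $\E[X] \ge \binom{m}{2}p$. The lower bound $\alpha > \eps/(\sqrt{|S|n}\log n)$ yields $p \ge \eps^3/(16 n\ell\log^2 n)$, and with $m = c\eps^{-3}\sqrt{n}\ell\log^2 n$ one obtains $\E[X] = \Omega(c^2\ell\log^2 n/\eps^3)$, which grows unboundedly in $c$.

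The heart of the proof is bounding $\mathrm{Var}(X)$. Pairs with disjoint index sets contribute zero covariance by independence of the walks, and the diagonal sum is at most $\E[X]$. For a pair of pairs sharing exactly one walk $W_i$, conditioning on $W_i$ makes $X_{ij}$ and $X_{ik}$ independent, so $\E[X_{ij}X_{ik}] = \E_{W_i}\bigl[\E[X_{ij}|W_i]^2\bigr]$. The key observation is that $X_{ij}=1$ requires $W_i$ to induce a light path, and for such $W_i$ the definition of light gives $\E[X_{ij}|W_i] \le cyc_{W_i} \le 1/\sqrt{n}$. Using this as a pointwise bound on one of the two factors and taking expectation of the remaining factor yields the sharp covariance estimate
\[
\E[X_{ij}X_{ik}] \;\le\; \frac{1}{\sqrt{n}}\,\E[X_{ij}] \;=\; \frac{\pr[\mathcal{E}]}{\sqrt{n}}.
\]
Since there are $O(m^3)$ pairs of pairs sharing a walk, their total covariance contribution to $\mathrm{Var}(X)$ is $O(m^3 \pr[\mathcal{E}]/\sqrt{n})$.

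Chebyshev's inequality then produces
\[
\pr[X=0] \;\le\; \frac{\mathrm{Var}(X)}{\E[X]^2} \;=\; O\!\Big(\tfrac{1}{\E[X]}\Big) \;+\; O\!\Big(\tfrac{1}{\sqrt{n}\,m\,\pr[\mathcal{E}]}\Big).
\]
A direct calculation gives $\sqrt{n}\,m\,p \ge c/16$, so the second term is $O(1/c)$; the first term is $O(\eps^3/(c^2\ell\log^2 n))$. Choosing $c$ sufficiently large makes both terms at most $1/6$, so $\pr[X\ge 1]\ge 2/3$, which means {\sc Cycle Tester} detects a cycle with probability at least $2/3$.

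The step I expect to be the hardest is establishing the covariance bound $\pr[\mathcal{E}]/\sqrt{n}$ in place of the naive $cyc_{W_i}^2 \le 1/n$. The naive estimate would inflate the variance to $\Theta(m^3/n)$ and demand $m$ close to $n$; the trick of extracting only a single factor of $1/\sqrt{n}$ from the light-path condition (and letting the other factor fold into $\pr[\mathcal{E}]$) is precisely the alignment that allows $\otilde(\sqrt{n})$ walks to suffice, and is what ties the $\sqrt{n}$ threshold in the definition of heavy to the $\sqrt{n}$ budget on the number of walks, matching the Goldreich--Ron lower bound.
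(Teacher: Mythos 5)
Your proof is correct and follows essentially the same route as the paper: the second-moment method on $X$, with disjoint pairs contributing zero covariance, and the crucial observation that for pairs sharing one walk $W_i$ one can extract a single factor of $1/\sqrt{n}$ (via lightness of $W_i$) while folding the remaining $r_{W_i}$ into $\pr[\mathcal{E}]$, yielding $\E[X_{ij}X_{ik}]\le \pr[\mathcal{E}]/\sqrt{n}$ rather than the lossy $1/n$. The paper phrases this as $\sum_{W\text{ light}}\pr(W)r_W^2 \le n^{-1/2}\sum_W\pr(W)r_W$, which is exactly your conditional-expectation calculation, and the final Chebyshev arithmetic agrees.
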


\begin{proof} There are $m$ walks performed. Let ${\cal F}_{ij}$ be the event
that for the $i$th and $j$th walks, the event ${\cal E}$ happens (a cycle
is detected in the way described above). Let $X_{ij}$ be the indicator
random variable of this event. Obviously, the number of cycles
detected is at least $X = \sum_{i,j} X_{ij}$. We have that
$\E[X_{ij}] = 2\pr({\cal E})$ (denoted by $\mu$)\footnote{The factor of $2$
comes about because in $X_{ij}$, we have labelled walks.}
 and $\E[X] = {m\choose 2} \mu$.
It will be convenient to work with random variables $\overline{X_{ij}} = X_{ij} - \mu$.
We can see that $\E[\overline{X_{ij}}] = 0$ and 
$$\E[\overline{X_{ij}}^2] = \E[X^2_{ij}] - 2\mu\E[X_{ij}] + \mu^2
= \mu - \mu^2 \leq \mu$$.
We now show that the variance of $X$ is not too large.
\begin{eqnarray*}
	var(X) & = & \EX[(X - \mu M)^2] \\
	& = & \EX[(\sum_{i,j}\overline{X}_{ij})^2] \\
	& \leq & \sum_{i,j} \EX[\overline{X}_{ij}^2]
	+ \sum_{\substack{(i,j), (i', j') \\ i \neq i', j \neq j'}} \EX[\overline{X}_{ij} \overline{X}_{i'j'}]
	+ 6\sum_{i < j < k} \EX[\overline{X}_{ij}\overline{X}_{ik}] \\
	& \leq & \mu m^2 + 0 + 6\sum_{i < j < k} \EX[\overline{X}_{ij}\overline{X}_{ik}]
\end{eqnarray*}

The second term is $0$ because for pairs $(i,j), (i', j')$ where $i \neq i', j \neq j'$,
$\E[\overline{X}_{ij}]$ and $\E[\overline{X}_{i'j'}]$ are independent. Dealing with
the last term is more difficult. This is where the properties of light walks
will help us bound this term. Fixing some $i,j,k$, we will deal with $\E[X_{ij}X_{ik}]$. 
When is $X_{ij}X_{ik}$ equal to $1$? This happens when both the $j$th and $k$th walks
form cycles with the $i$th walk (in the way prescribed by $\cal E$).
Given a light walk $W$, let $r_W$ denote the probability that a random walk
forms a cycle with $W$ as described by $\cal E$. In other words, given a random 
walk $W'$, $r_W$ is the probability that pair of walks $W$ and $W'$ are in $\cal E$.
Note that $r_W$ is at most $cyc_W$, which is bounded by $1/\sqrt{n}$.
\begin{eqnarray*}
	\E[X_{ij}X_{ik}] & = & \sum_{W : \ W \ \textrm{light}} \pr(W) r^2_W \\
	& \leq & (1/\sqrt{n}) \sum_{W : \ W \ \textrm{light}} \pr(W) r_W \\
	& = & (\mu/\sqrt{n})
\end{eqnarray*}
We can now bound the variance of $X$, noting that $m \geq \sqrt{n}$.
$$	var(X) \leq \mu m^2 + 6\mu m^3/\sqrt{n} \leq 7\mu m^3/\sqrt{n} $$
By Chebyschev's inequality,
$$ \pr[|X - \mu m^2| > k \sqrt{7\mu m^3n^{-1/2}}] < 1/k^2 $$
By Lemma~\ref{lem-tester} and the choice of parameters:
$$	\mu m \sqrt{n} = 2\pr({\cal E}) \times \frac{cn\ell\log^2n}{\eps^3} \\
\geq 2\eps |S| \times \frac{\eps^2}{16|S|n\ell\log^2n} \times \frac{cn\ell\log^2n}{\eps^3} > 28
$$
This implies that $\mu m^2 > 2\sqrt{7\mu m^3n^{-1/2}}$, and that
$X$ is zero with probability at most $1/4$.
\qed
\end{proof}

The proofs of Claim~\ref{clm-blue} and Lemma~\ref{lem-cheb} prove
Lemma~\ref{lem-special}.

\section{The general case}

We now use the lemmas proved in the previous section to give a property tester
for the general case. We essentially use a lemma of Goldreich and Ron~\cite{GR99}
to show how a tester for the a well-connected subgraph can
be used for a general graph. The proof given here parallels that of~\cite{GR99}
for bipartiteness. We slightly paraphrase for convenience:

\begin{lemma} \label{GR} \cite{GR99} Let $H$ be subgraph of $G$ with at least $\eps n/4$ vertices.
Let $\ell = (\log (n/\eps))^6\eps^{-8}$. For at least half of the 
vertices $s$ in $H$, the following is true: there is a subset of vertices
$T_s$ in $H$ such that, 
	\begin{enumerate}
		\item The number of edges from $T_s$ to the rest of $H$ is at most $\eps d|T_s|/2$.
		\item For every $v \in T_s$, $q_v \geq \frac{\eps}{\sqrt{|S||H|\log(n/\eps)}} > \frac{\eps}{\sqrt{|S|n}\log n}$.
	\end{enumerate}
\end{lemma}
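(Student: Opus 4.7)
The plan is to adapt the argument from Goldreich--Ron's bipartiteness tester~\cite{GR99}, which establishes essentially the same statement. The proof proceeds by a level-set / conductance construction on the lazy random walk from a candidate starting vertex of $H$, combined with an averaging argument to show that a good ``cut'' must exist in the walk's reach distribution for most starting vertices.

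First, for each $s \in H$, I would construct a candidate $T_s$ as follows. Sort the vertices of $H$ by the $\ell$-step reach probability $q_v$ from $s$, and consider the nested family of prefixes $T^{(1)}_s \subset T^{(2)}_s \subset \cdots$. Each prefix is a natural candidate for $T_s$. I then want to show there exists a prefix that is simultaneously (a) bounded away from the rest of $H$ in a conductance sense --- few edges leaving it inside $H$ --- and (b) such that every vertex in it still carries at least the prescribed amount of reach mass. The existence of such a prefix is forced by an averaging argument: if every prefix had an edge-boundary in $H$ exceeding $\eps d / 2$ times its size, then at every step the lazy walk would leak an $\Omega(\eps)$ fraction of the mass currently sitting in its ``high-rank'' support out of that support. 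Iterating this for $\ell = \poly(\log(n/\eps),\eps^{-1})$ steps would drive the cumulative reach mass on any bounded prefix far below what is compatible with the trivial bound $\sum_v q_v \ge 1$, producing a contradiction and hence a good index $k$ at which to stop.

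Finally, I would argue that at most half of the starting vertices $s \in H$ fail to admit such a $T_s$. This uses reversibility of the lazy walk: the degree-proportional measure is stationary, so reach probabilities from $s$ to $v$ and from $v$ to $s$ are comparable, and a global double counting over the at most $d|H|/2$ edges of $H$ controls the total ``mass'' of pathological starting points. The main obstacle, and the reason for quoting~\cite{GR99} rather than re-deriving in detail, is matching the precise quantitative constants: the walk length $\ell$ must be long enough that the $q_v$ lower bound in condition~2 is met, yet short enough that the $\eps d|T_s|/2$ boundary bound of condition~1 has not been violated. This tight balance is exactly what is tuned by the Cheeger-type calculation of~\cite{GR99} with $\ell = (\log(n/\eps))^6 \eps^{-8}$, so since the present paper states the lemma as a paraphrase of theirs, I would simply invoke their proof with parameters relabelled to match the current notation.
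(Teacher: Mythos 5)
The paper itself offers no proof of this lemma at all: it is stated verbatim as a cited result from Goldreich--Ron~\cite{GR99} (``We slightly paraphrase for convenience''), so there is no in-paper argument to compare yours against. You correctly recognize this and ultimately defer to invoking~\cite{GR99}, which is exactly what the paper does; the level-set/conductance sketch you supply is a plausible gloss of how the Goldreich--Ron combinatorial partition lemma is proved, but it is supplementary context rather than a deviation from anything the paper actually argues.
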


\begin{center}
\fbox{\begin{minipage}{\columnwidth} {\sc Cycle-freeness Tester}\\
\textbf{Input:} Graph $G$, and $0 < \eps < 1$
\begin{enumerate}
	\item Choose a random subset $R$ of $c/\eps$ vertices ($c$ is sufficiently large constant).
	\item For every $s \in R$, run {\sc Cycle Finder} with $s$ as input. Output ACCEPT
	iff all these runs output ACCEPT.
\end{enumerate}
\end{minipage}}
\end{center}

The property tester, {\sc Cycle-freeness Tester} simply calls {\sc Cycle Finder} from a sufficiently large random
sample of vertices. We show that it rejects graphs far
from being cycle-free with high probability.

\begin{theorem} If $G$ is $\eps$-far from being cycle-free, then
the procedure {\sc Cycle-freeness tester} rejects with probability $> 2/3$.
\end{theorem}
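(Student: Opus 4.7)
The plan is to reduce the theorem to an application of Lemma~\ref{lem-special} on a well-chosen subgraph. Concretely, I will show that when $G$ is $\eps$-far from being cycle-free, a uniformly random vertex $s\in V(G)$ is, with probability at least $\eps/8$, a \emph{good starting point}: $s$ is special in the sense of Lemma~\ref{GR} for some subset $T_s\subseteq V(G)$, and the induced subgraph $G_{T_s}$ is itself $(\eps/16)$-far from cycle-free. Given this, Lemma~\ref{lem-special} (applied with $S=T_s$ and distance parameter $\eps/16$) says that {\sc Cycle Finder}$(s)$ outputs a cycle with probability at least $2/3$, so a single call on a uniformly random vertex rejects with probability at least $(\eps/8)(2/3) = \eps/12$. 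Since the $|R|=c/\eps$ starting vertices are sampled independently and the walks used by different invocations of {\sc Cycle Finder} are independent, the probability that no run rejects is at most $(1-\eps/12)^{c/\eps}\leq e^{-c/12}$, which is less than $1/3$ for sufficiently large constant $c$.

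The core structural step is an iterative peeling using Lemma~\ref{GR}. Initialise $H_0 := V(G)$. While $|H_i|\geq \eps n/4$, Lemma~\ref{GR} furnishes at least $|H_i|/2$ special vertices $s\in H_i$, each equipped with a well-connected subset $T_s\subseteq H_i$ satisfying the small-boundary and reach-probability conclusions of the lemma. If \emph{every} such special $s\in H_i$ has $G_{T_s}$ being $(\eps/16)$-far from cycle-free, halt and declare this to be the success step. Otherwise, pick any special $s_i\in H_i$ for which $G_{T_{s_i}}$ is $(\eps/16)$-close to cycle-free and set $H_{i+1} := H_i\setminus T_{s_i}$.

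The main obstacle is ruling out the alternative exit, namely halting because $|H_k|<\eps n/4$. If this happens, I construct an explicit modification of $G$ into a forest using fewer than $\eps n d$ edge deletions, contradicting the hypothesis: (i)~delete every boundary edge of every extracted $T_{s_i}$, totalling $\sum_i \eps d|T_{s_i}|/2 \leq \eps dn/2$ by part~1 of Lemma~\ref{GR}; (ii)~for each extracted $T_{s_i}$, since $G_{T_{s_i}}$ is $(\eps/16)$-close to cycle-free, remove at most $\eps d|T_{s_i}|/16$ further edges inside $T_{s_i}$ to make it acyclic, totalling at most $\eps dn/16$; (iii)~destroy all of the at most $d|H_k|/2\leq \eps dn/8$ edges inside the residual $H_k$. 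After step~(i) the pieces $T_{s_1},\dots,T_{s_k},H_k$ are pairwise disconnected, and steps~(ii) and (iii) leave each piece a forest, so the whole result is a forest. The total number of deletions is at most $\bigl(\tfrac{8}{16}+\tfrac{1}{16}+\tfrac{2}{16}\bigr)\eps dn = \tfrac{11}{16}\eps dn<\eps dn$, the desired contradiction. Consequently the procedure must halt in the success branch at some step $i^{*}$, and at that step all $\geq |H_{i^{*}}|/2\geq \eps n/8$ special vertices of $H_{i^{*}}$ are good starting points, so a uniformly random $s\in V(G)$ is good with probability at least $\eps/8$, completing the plan. The scaling $\eps\mapsto \eps/16$ in the application of Lemma~\ref{lem-special} is harmlessly absorbed by the $\poly(d\eps^{-1}\log n)$ factor in the running time of {\sc Cycle Finder}.
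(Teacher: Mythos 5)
Your proof is correct and follows essentially the same route as the paper: you peel off well-connected subsets via Lemma~\ref{GR}, use Lemma~\ref{lem-special} to control each extracted piece, and carry out the same cross-edge/residual/interior edge-counting to bound the cost of making $G$ a forest. The paper phrases this as a contrapositive (assume low rejection probability, deduce that every piece is close to a forest, conclude $\eps$-closeness) while you argue forward with an internal dichotomy (the peeling either halts at a step where many starting points are ``good,'' or the residual gets small and yields a contradiction), but this is a presentational difference rather than a different argument.
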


\begin{proof} As in most property testing proofs, we will actually prove the contrapositive: if 
{\sc Cycle-freeness tester} rejects $G$ with probability less than $2/3$,
then $G$ is $\eps$-close to being cycle-free. This will be shown by removing
$\eps nd$ edges and getting a forest. The behavior of {\sc Cycle Finder}
on the random subset $R$ will guide us to the set of edges that need to be removed.
Technically, we will run {\sc Cycle-freeness tester} with input
$\eps/4$, instead of $\eps$.

Let vertex $s$ be called \emph{strong} if the probability that
{\sc Cycle Finder} (with $s$ as input) detects a cycle is at least $2/3$.
We observe that at most $\eps n/10$ vertices can be strong.
If not, then with probability $> 5/6$, the random subset $S$
contains at least $c/20$ strong vertices. The probability
that at least one of these strong vertices is rejected by {\sc Cycle Finder}
is $> 5/6$. This shows that {\sc Cycle-freeness Tester} rejects
$G$ with probability at least $2/3$. This contradicts our initial
assumption.

We will construct a special partition of sthe vertex set of $G$ into
$U_1, U_2,\cdots,U_r$. For $1 \leq i < r$,
the induced subgraph $G_{U_i}$ is $\eps/4$-close
to begin cycle-free. This is not
true for $U_r$, but the size $|U_r|$ is at most $\eps n/4$.
Furthermore, the number of edges from $U_i$ to $\bigcup_{j > i} U_j$
is at most $\eps|U_i|d/2$. If such a partition
exists, we can remove $\eps nd$ edges to make
$G$ cycle-free. First, we remove all edges incident
to $U_r$ and any edge that crosses this partition.
This is a total of $\eps nd/4 + \eps nd/2$ edges.
This leaves us with $G_{U_i}$, for $i < r$. In each
of these, we can remove at most $\eps |U_i|/4$ edges
to get a forest. The total number of edges
removed is at most $\eps nd$ edges.

We construct this partition inductively.
At any intermediate stage, we have a partition
$U_1,U_2,$ $\cdots,U_i$. Each $G_{U_j}$, $j < i$
is $\eps/4$-close to being cycle-free, and the
number of edges going out of each $U_i$ is small
(as given by the condition in previous paragraph).
Initially, $U_1$ is simply the complete vertex
set and previous condition is vacuously true.
If $|U_i| \leq \eps n/4$, then we are done.
Suppose that is not the case, so $|U_i| > \eps n/4$.
By our initial assumption, there are at
most $\eps n/10$ strong vertices.
These conditions together with Lemma~\ref{GR} imply
that there is at least
one weak vertex $s$ that satisfies that conditions
of Lemma~\ref{GR}.
Therefore, $s$ is a special vertex for $T_s$
such that {\sc Cycle Finder} with $s$ as input
detects a cycle with probability less than $2/3$.
By Lemma~\ref{lem-special}, the induced subgraph 
$G_{T_s}$ must be $\eps/4$-close (since we provided
an input of $\eps/4$ to {\sc Cycle-freeness tester}).
We set $U_i$ to be $T_s$, and $U_{i+1}$ to be the
complement of $\bigcup_{j \leq i} U_j$.
Repeating this procedure till it ends, we construct
the desired partition. This proves that 
$G$ is $\eps$-close to being cycle-free.
\qed
\end{proof}

It is easy to see that {\sc Cycle-freeness Tester}
always accepts forests and provides a certificate
of size $\poly(\eps^{-1}\log n)$ when rejecting.
The running time is $O(\sqrt{n}\poly(d\eps^{-1}\log n))$.
This completes the proof of Theorem~\ref{thm-main}.

\bibliographystyle{plain}
\bibliography{cycle_freeness}

\end{document}